\documentclass[twocolumn,notitlepage,nofootinbib,superscriptaddress,10pt]{revtex4-2}
\pdfoutput=1

\usepackage{amsmath,amssymb,bm}
\usepackage{amsthm}
\usepackage{graphicx}
\usepackage{hyperref}
\hypersetup{hidelinks}
\usepackage{microtype}
\usepackage{array}

\newcolumntype{L}[1]{>{\raggedright\arraybackslash}p{#1}}
\newcolumntype{C}[1]{>{\centering\arraybackslash}p{#1}}

\newcommand{\dd}{\mathrm{d}}
\newcommand{\Tr}{\operatorname{tr}}

\newcommand{\Sp}{S_p}
\newcommand{\Pc}{P_c}

\newcommand{\Ucoh}{\mathcal{U}}
\newcommand{\Ctwo}{\mathcal{C}_2}
\newcommand{\Aarrow}{\mathcal{A}_{\lambda}}

\newcommand{\Pcperp}{P_{c,\perp}}
\newcommand{\AlgIRB}{\mathfrak{A}_{\rm IRB}}

\newtheorem{proposition}{Proposition}

\begin{document}

\title{Intrinsic Pointer Basis and Irreversible Classicality from Coherence Contraction}

\author{Jos\'e J.~Gil}
\email{ppgil@unizar.es}
\affiliation{Department of Applied Physics, University of Cantabria, Av. Los Castros 48, 39005 Santander, Spain}

\date{\today}

\begin{abstract}
This work analyzes an operational route to classical behavior for reduced quantum states using the intrinsic reference basis (IRB).  Relative to a fixed physical conjugation, the IRB separates intrinsic populations from a real antisymmetric cohesion sector.  A globally bounded cohesion index is defined and its exponential contraction is proved for phase-free dephasing dynamics aligned with the IRB; for general aligned dephasing, the corresponding modulus-based coherence functional contracts at the same computable rates.  The results provide distance bounds to the IRB-diagonal description and a logarithmic upper bound on the time required to reach a prescribed experimental tolerance.  The IRB projectors constitute state-derived candidate pointer sectors, and they become dynamically stable pointer sectors when the effective dephasing generator is aligned with them and damps the relevant inter-sector coherences.  Degenerate population sectors lead naturally to block-classicality and protected intra-block coherence.  In a two-level active sector, the cohesion index equals fringe visibility, giving a direct interferometric test of the contraction law.  The construction is independent of any spacetime- or unification-emergence hypothesis and is intended as a channel-level complement to environment-induced einselection.
\end{abstract}

\maketitle

\section{Introduction}\label{sec:intro}

The suppression of phase-sensitive observables in a reduced quantum state is a central
ingredient of the quantum-to-classical transition.  In the standard decoherence account,
the robust alternatives are determined by the system--environment interaction, and their
identification generally requires either a microscopic model or process-level information
about the reduced evolution~\cite{Joos2003,Schlosshauer2007,Zurek2003}.
The present work asks a narrower and operationally precise question.  Once a reduced state
and a physical real structure are fixed, can one obtain an intrinsic decomposition that
separates population content from a cohesion sector and can one determine when an aligned
open-system dynamics contracts that cohesion sector?

Let $K$ be a fixed physical conjugation on the active Hilbert space.  In the present paper
this conjugation is part of the state-and-channel specification, in the same way that a
measurement convention, a real symmetry representation, or a secular basis is part of an
operational description.  If $K$ is selected by an additional calibration or variational
principle, the contraction statements below apply after that structure has been fixed.
Diagonalizing the $K$-real symmetric part of a density operator by an orthogonal
transformation defines the intrinsic reference basis (IRB), in which
\begin{equation}
\begin{gathered}
\rho_O=Q^{\mathsf T}\rho Q=A+\mathrm{i}N,\\
A=\operatorname{diag}(a_1,a_2,\ldots),\quad
N^{\mathsf T}=-N .
\end{gathered}
\label{eq:IRBdecomp}
\end{equation}
Here $a_1\geq a_2\geq\cdots\geq0$ and $\sum_i a_i=1$ are intrinsic populations,
whereas $n_{ij}:=N_{ij}$ ($i<j$) are intrinsic cohesion components.  This construction is
canonical relative to $K$ and to the ordering of the populations, with the residual sign
and degeneracy freedoms stated explicitly in Appendix~\ref{app:IRB}; related intrinsic
density-matrix descriptors were developed in Refs.~\cite{Gil2020Asymmetry,Gil2023Dimensionality}.

The quantity used throughout this paper is the cohesion index
\begin{equation}
\Pc^2=2\|N\|_F^2=4\sum_{i<j}n_{ij}^2,
\qquad 0\leq\Pc\leq1 .
\label{eq:Pc_intro}
\end{equation}
The upper bound is a global positivity result for density operators, proved below; it does
not follow from simultaneous saturation of pairwise $2\times2$ positivity inequalities.
This point is important in dimensions larger than two.

We study reduced Markovian dynamics whose dephasing generator is aligned with the IRB
projectors $\Pi_i=\lvert i\rangle\langle i\rvert$.  More precisely, the exact contraction
results assume a fixed family of IRB projectors and a GKSL representation in which the
Hamiltonian and dephasing operators commute with these projectors.  The off-diagonal
moduli then decay as
\begin{equation}
|\rho_{ij}(t)|=|\rho_{ij}(0)|e^{-\Gamma_{ij}t},
\qquad \Gamma_{ij}\geq0 .
\label{eq:nij_decay_intro}
\end{equation}
For Hermitian diagonal dephasing operators after removal of the diagonal Hamiltonian
phase evolution, and more generally whenever the remaining pair phases vanish, the
intrinsic components obey
$N_{ij}(t)=N_{ij}(0)e^{-\Gamma_{ij}t}$ and $\Pc$ becomes a Lyapunov diagnostic.
With $\Gamma_{\min}$ the smallest positive rate acting on an initially occupied cohesion
component, the state reaches the tolerance $\Pc\leq\varepsilon$ no later than
\begin{equation}
 t_{\rm cl}^{\rm ub}(\varepsilon)=
 \max\!\left\{0,\frac{1}{\Gamma_{\min}}
 \ln\!\left(\frac{\Pc(0)}{\varepsilon}\right)\right\} .
\label{eq:tcl_intro}
\end{equation}
Equality holds for a single active decay rate, in particular for the two-level example.

The IRB is state-derived, but pointer stability is a dynamical statement.  Accordingly,
we do not claim that a reduced state alone identifies the environmental pointer structure.
Rather, the IRB supplies candidate pointer projectors, and these projectors become stable
under the explicitly testable alignment condition on the effective generator; effective
selection additionally requires positive damping rates between the sectors being resolved.  This
formulation is complementary to einselection and remains applicable when the generator
is reconstructed experimentally rather than derived from a microscopic bath model.

The physical content of the criterion is directly testable.  In a normalized two-level
active sector the fringe visibility equals $\Pc$ exactly, independently of population
balance.  In higher dimension, pairwise fringe visibilities determine the contributing
$|n_{ij}|$, while tomography determines $\Pc$ and distance bounds to the IRB-diagonal
state.

The article is self-contained and does not rely on any claim concerning emergent
spacetime, Lorentz signature, gravity, or unification.  Section~\ref{sec:toolkit}
defines the IRB decomposition and its operational diagnostics.  Section~\ref{sec:gksl}
proves contraction under aligned dephasing.  Section~\ref{sec:threshold} gives the
classicality threshold, the classicalization-time bound, and the block-classical extension.
Sections~\ref{sec:comparison} and~\ref{sec:signatures} relate the construction to
einselection and measurable tests, respectively.

\section{Intrinsic decomposition of the reduced state}\label{sec:toolkit}

\subsection{The intrinsic reference basis}

The IRB is defined directly from the reduced density operator $\rho$ on its active support,
without reference to any external coordinate choice or system-environment model.
Starting from a fixed intrinsic conjugation $K$ (an antiunitary involution on the active
subspace specifying a ``real form'' of the quantum-state space; Appendix~\ref{app:IRB}),
any density operator decomposes as
\begin{align}
\rho &= S + \mathrm{i}T,\nonumber\\
S&:=\frac{\rho+K\rho K}{2},\quad S^{\mathsf T}=S,\nonumber\\
T&:=\frac{\rho-K\rho K}{2\mathrm{i}},\quad T^{\mathsf T}=-T,
\label{eq:SAdecomp}
\end{align}
The residual gauge freedom preserving $K$ is an orthogonal group $SO(d_{\rm act})$ acting by
similarity transformations.
The IRB fixes this gauge by diagonalizing $S$: an orthogonal matrix $Q$ is chosen so that
$Q^{\mathsf T}SQ=A=\mathrm{diag}(a_1,\ldots,a_{d_{\rm act}})$ with $a_1\ge a_2\ge\cdots$,
and the intrinsic coherences are defined by
$N:=Q^{\mathsf T}TQ$, $n_{ij}:=(N)_{ij}$ for $i<j$.
The resulting state is $\rho_O = A+\mathrm{i}N$, as in Eq.~\eqref{eq:IRBdecomp}.

The IRB is unique when the spectrum of $S$ on the active support is nondegenerate;
degenerate eigenvalues lead to pointer subspaces rather than a unique pointer basis
(Sec.~\ref{sec:degeneracies}).
Gauge-invariant quantities (the ordered set $\{a_i\}$ and any functional of $N$ that is
$SO(d_{\rm act})$-invariant, such as $\|N\|_F$) are independent of the representative $Q$
chosen within each gauge orbit.
The construction algorithm and uniqueness analysis are detailed in Appendix~\ref{app:IRB}.

Physically, the role of the IRB is that of a canonical frame: it does not represent an
external choice but is extracted from the state in a way that unambiguously separates the
diagonal sector, which carries emergent classical weights, from the antisymmetric sector,
which carries the phase-sensitive content responsible for interference.

\subsection{Populations, cohesion, and its global bound}

The population entropy, introduced under the name ``dimensional entropy'' in
Ref.~\cite{Gil2023Dimensionality}, is
\begin{equation}
\Sp:=-\sum_i a_i\ln a_i .
\label{eq:Sp}
\end{equation}
The quadratic intrinsic cohesion is
\begin{equation}
\Ucoh:=\sum_{i<j}n_{ij}^2=\frac{\|N\|_F^2}{2},
\label{eq:Ucoh}
\end{equation}
and the cohesion index is defined by
\begin{equation}
\Pc^2:=2\|N\|_F^2=4\Ucoh .
\label{eq:Pc}
\end{equation}
This definition is independent of the instantaneous distribution of populations and is
therefore appropriate for tracking loss of intrinsic cohesion when populations also change.

Positive semidefiniteness implies the useful pairwise check
\begin{equation}
n_{ij}^2\leq a_i a_j \qquad(i\ne j),
\label{eq:positivity_nij}
\end{equation}
but in dimensions greater than two these inequalities do not characterize positivity of
the full density operator.  The normalization of Eq.~\eqref{eq:Pc} is instead supported
by the following global result.

\begin{proposition}[Global bound on intrinsic cohesion]\label{prop:PcBound}
For every density operator $\rho$ and every fixed conjugation $K$, its IRB cohesion index
satisfies $0\leq\Pc\leq1$.
\end{proposition}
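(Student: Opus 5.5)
The plan is to avoid the pairwise inequalities~\eqref{eq:positivity_nij} altogether and work instead with two global Hilbert--Schmidt identities. Both follow from positivity and normalization of $\rho$ together with the fact that $K$ maps density operators to density operators. Since $\|N\|_F=\|T\|_F$ by orthogonal invariance of the Frobenius norm, I will work in any $K$-real basis. In such a basis $\rho=S+\mathrm{i}T$ with $S$ real symmetric and $T$ real antisymmetric. The lower bound $\Pc\geq0$ is immediate from the definition~\eqref{eq:Pc}, so only $\Pc\le1$ needs an argument.

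\emph{Step 1 (purity bound).} Since $\rho\succeq0$ and $\Tr\rho=1$, its eigenvalues $\lambda_k\in[0,1]$ sum to one, so $\Tr\rho^2=\sum_k\lambda_k^2\le1$. Expanding $\Tr\rho^2=\Tr(S+\mathrm{i}T)^2$, the cross terms $\Tr(ST)$ and $\Tr(TS)$ vanish, because the trace of a symmetric matrix times an antisymmetric matrix is zero. Using also $\Tr T^2=-\|T\|_F^2$, this gives
\begin{equation*}
\Tr\rho^2=\|S\|_F^2+\|T\|_F^2\le1 .
\end{equation*}

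\emph{Step 2 (conjugate-overlap bound).} The operator $K\rho K$ is again a density operator; in the $K$-real basis it is the complex conjugate $\bar\rho=S-\mathrm{i}T$. The trace of a product of two positive semidefinite operators is nonnegative, so $\Tr(\rho\bar\rho)\ge0$. Expanding, and again discarding the mixed terms, gives
\begin{equation*}
\Tr\bigl[(S+\mathrm{i}T)(S-\mathrm{i}T)\bigr]=\|S\|_F^2-\|T\|_F^2\ge0 .
\end{equation*}
Hence $\|T\|_F^2\le\|S\|_F^2$.

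\emph{Step 3 (combine).} Adding the two inequalities gives
\begin{equation*}
\Pc^2=2\|N\|_F^2=2\|T\|_F^2\le\|S\|_F^2+\|T\|_F^2\le1 .
\end{equation*}
This bound holds for every fixed $K$, and it is independent of the gauge representative $Q$ because it involves only $SO(d_{\rm act})$-invariant norms.

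The only step needing care is Step~2. The essential input is the positivity of the conjugated state $K\rho K$, which holds because $K$ is antiunitary, so that $\Tr(\rho\,K\rho K)\ge0$. This global inequality is what replaces the insufficient pairwise checks~\eqref{eq:positivity_nij}; the remaining expansions are routine.
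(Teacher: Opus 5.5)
Your proof is correct, and it takes a genuinely different route from the paper's.

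\textbf{The paper's route.} It argues through pure states. It writes $\lvert\psi\rangle=\bm{x}+\mathrm{i}\bm{y}$ in a $K$-real basis and fixes the global phase so that $\bm{x}^{\mathsf T}\bm{y}=0$. It then computes $\|T\|_F^2=2\|\bm{x}\|^2\|\bm{y}\|^2\leq 1/2$ explicitly. Finally it passes to mixed states through a convex decomposition $\rho=\sum_k p_k\rho_k$ and the triangle inequality for $\|\cdot\|_F$.

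\textbf{Your route.} You use two global trace inequalities, $\Tr\rho^2\leq1$ and $\Tr(\rho\,K\rho K)\geq0$. This needs no pure-state decomposition, no phase fixing, and no convexity step.

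\textbf{What your route buys.} If you subtract your two expansions instead of adding them, you get the exact identity
\[
\Pc^2=\Tr\rho^2-\Tr(\rho\,K\rho K)=\tfrac12\|\rho-K\rho K\|_2^2 .
\]
So the proposition reduces to the standard fact that the Hilbert--Schmidt distance between two density operators is at most $\sqrt2$. It also gives the sharper bound $\Pc^2\leq\Tr\rho^2$. Hence $\Pc=1$ forces $\rho$ to be pure and orthogonal to its $K$-conjugate, i.e.\ $\psi^{\mathsf T}\psi=0$ in a $K$-real basis.

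\textbf{What the paper's route buys.} Its pure-state computation gives the explicit value $\|T\|_F^2=2\|\bm{x}\|^2\|\bm{y}\|^2$. That value makes the saturating pure states ($\|\bm{x}\|=\|\bm{y}\|$, $\bm{x}\perp\bm{y}$) immediately visible. This is the same condition your equality case encodes as $\psi^{\mathsf T}\psi=0$.
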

\begin{proof}
The Frobenius norm of the antisymmetric part is invariant under the real orthogonal
transformation that defines the IRB, so it suffices to work before diagonalizing the
symmetric part.  For a pure state $\rho=\lvert\psi\rangle\langle\psi\rvert$, write
$\lvert\psi\rangle=\bm{x}+\mathrm{i}\bm{y}$ in a $K$-real basis and use the irrelevant
global phase to impose $\bm{x}^{\mathsf T}\bm{y}=0$.  Then
$T=\bm{y}\bm{x}^{\mathsf T}-\bm{x}\bm{y}^{\mathsf T}$ and
$\|T\|_F^2=2\|\bm{x}\|^2\|\bm{y}\|^2\leq1/2$ because
$\|\bm{x}\|^2+\|\bm{y}\|^2=1$.  Hence $2\|T\|_F^2\leq1$ for pure states.
For a mixed state $\rho=\sum_k p_k\rho_k$ with pure $\rho_k$, the triangle inequality
gives $\|T\|_F\leq\sum_k p_k\|T_k\|_F\leq1/\sqrt{2}$.  Therefore
$\Pc^2=2\|N\|_F^2=2\|T\|_F^2\leq1$.
\end{proof}

We denote by $\mathcal I_{\rm act}$ the active support, with cardinality
$d_{\rm act}$; exact statements use $a_i>0$, while experimental implementations may use
a declared population threshold.

\subsection{Distance to the IRB-diagonal description and visibility}

Let $\Delta_{\rm IRB}$ be the complete dephasing channel in the fixed IRB, so that
$\Delta_{\rm IRB}[\rho_O]=A$.  Equations~\eqref{eq:Ucoh}--\eqref{eq:Pc} imply the exact
Hilbert--Schmidt identity
\begin{equation}
\big\|\rho_O-\Delta_{\rm IRB}[\rho_O]\big\|_2^2
=\|N\|_F^2=\frac{\Pc^2}{2} .
\label{eq:hsdist}
\end{equation}
For trace distance, using $\|X\|_1\leq\sqrt{\operatorname{rank}X}\|X\|_F$ yields
\begin{align}
D_{\rm tr}\!\left(\rho_O,\Delta_{\rm IRB}[\rho_O]\right)
&\leq \sqrt{\frac{r_N}{8}}\,\Pc\nonumber\\
&\leq \sqrt{\frac{d_{\rm act}}{8}}\,\Pc,
\label{eq:trdist_bound}
\end{align}
where $r_N:=\operatorname{rank}N$.
Thus contraction of $\Pc$ gives an operational bound on proximity to the IRB-diagonal
description.

In a two-path readout restricted to IRB sectors $i$ and $j$, the fringe visibility is~\cite{Englert1996}
\begin{equation}
V_{ij}=\frac{2|n_{ij}|}{a_i+a_j} .
\label{eq:visibility_general}
\end{equation}
For a normalized two-level active state ($a_1+a_2=1$), Eq.~\eqref{eq:Pc} gives the exact
identity
\begin{equation}
\Pc=2|n_{12}|=V_{12},
\label{eq:pc_visibility_balanced}
\end{equation}
without requiring balanced populations.  In a higher-dimensional state, each $V_{ij}$
is a pairwise witness; together with the populations, these visibilities reconstruct the
corresponding contributions to $\Pc$.

For comparison with standard resource-theory diagnostics, once the IRB projectors are
fixed for a dynamical test, the relative entropy of coherence
\begin{equation}
\begin{gathered}
C_{\rm rel}(\rho):=S\!\left(\Delta_{\rm IRB}[\rho]\right)-S(\rho),\\
S(\rho):=-\Tr(\rho\ln\rho).
\end{gathered}
\label{eq:Creldef}
\end{equation}
is a standard monotone under incoherent operations relative to that fixed projective
structure~\cite{Baumgratz2014,Streltsov2017}.  No claim is made that $\Pc$ is a monotone
under every incoherent CPTP map.  The result proved here is its Lyapunov contraction under
the aligned phase-free dephasing semigroups defined in Sec.~\ref{sec:gksl}.

\subsection{Conditional arrow functional}
\label{subsec:arrow}

When an additional population dynamics increases $\Sp$ while aligned dephasing contracts
$\Pc$, the composite diagnostic
\begin{equation}
\Aarrow:=\Sp+\lambda(1-\Pc^2),\qquad\lambda>0,
\label{eq:Aarrow}
\end{equation}
is nondecreasing.  This statement is deliberately conditional: dephasing alone preserves
the populations, and a population-mixing channel is an additional dynamical ingredient.

\begin{proposition}[Conditional monotonicity of $\Aarrow$]\label{prop:Aarrow}
Suppose that on an interval of evolution $\dot\Sp\geq0$ and $\dot\Pc^2\leq0$.
Then $\dot\Aarrow\geq0$ for every $\lambda>0$.
\end{proposition}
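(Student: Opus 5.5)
The statement is a direct consequence of linearity of the time derivative applied to the definition in Eq.~\eqref{eq:Aarrow}, so the plan is simply to differentiate $\Aarrow$ term by term and sign each contribution. First I will fix the regularity needed for the derivatives to make sense on the interval: since the hypotheses are phrased in terms of $\dot\Sp$ and $\dot\Pc^2$, I will assume (as the statement implicitly does) that both $\Sp(t)$ and $\Pc^2(t)$ are differentiable there. For $\Pc^2=2\|N\|_F^2$ this is automatic for a differentiable trajectory $\rho(t)$ with a smoothly chosen IRB, and in any case $\|N\|_F^2=\|T\|_F^2$ is gauge invariant and quadratic in $\rho$, as used in the proof of Proposition~\ref{prop:PcBound}. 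For $\Sp$ one needs $a_i(t)>0$ on the active support, which is consistent with the convention stated after Proposition~\ref{prop:PcBound}.

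The key step is then the identity
\begin{equation}
\dot\Aarrow=\dot\Sp+\lambda\,\frac{\dd}{\dd t}\bigl(1-\Pc^2\bigr)=\dot\Sp-\lambda\,\dot\Pc^2 .
\end{equation}
The first term is nonnegative by hypothesis. The second is a product of $\lambda>0$ with $-\dot\Pc^2\geq0$, so it is also nonnegative. Their sum is therefore nonnegative for every $\lambda>0$, independently of the relative sizes of the two contributions. No compensation between the terms is needed, which is exactly why the result holds uniformly in $\lambda$.

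The only genuine subtlety, and the step I expect to require care, is differentiability when the IRB ordering changes, for instance at a population crossing. There the individual $a_i(t)$ may only be Lipschitz. To handle this I would phrase the argument for one-sided (Dini) derivatives, or equivalently in integrated form: if $\Sp(t_2)\geq\Sp(t_1)$ and $\Pc^2(t_2)\leq\Pc^2(t_1)$ for $t_2>t_1$, then $\Aarrow(t_2)-\Aarrow(t_1)\geq0$. This version requires no smoothness at all. It works because $\Sp$ is a symmetric function of the populations and $\Pc^2$ is gauge invariant, so neither depends on the labeling choice at a crossing.
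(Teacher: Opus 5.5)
Your proof is correct and matches the paper's argument: differentiating term by term gives $\dot\Aarrow=\dot\Sp-\lambda\dot\Pc^2$, and both contributions are nonnegative under the hypotheses for every $\lambda>0$. Your regularity remarks and the integrated (Dini-derivative) version are extra care that the paper does not include. They are harmless.
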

\begin{proof}
Direct differentiation gives $\dot\Aarrow=\dot\Sp-\lambda\dot\Pc^2\geq0$.
\end{proof}

Figure~\ref{fig:flow} and Table~\ref{tab:operational-dictionary} summarize the logical
content of the exact aligned-dephasing result.

\begin{figure*}[t]
\centering
\setlength{\fboxsep}{4pt}
\setlength{\fboxrule}{0.4pt}
\begin{minipage}{0.98\linewidth}
\centering
\setlength{\tabcolsep}{0pt}
\renewcommand{\arraystretch}{1.0}
\begin{tabular}{@{}c@{\hspace{0.25em}}c@{\hspace{0.25em}}c@{\hspace{0.25em}}c@{\hspace{0.25em}}c@{\hspace{0.25em}}c@{\hspace{0.25em}}c@{}}
\fbox{\parbox{0.190\linewidth}{\centering IRB-aligned\\
dephasing\\ (GKSL)}} &
$\Longrightarrow$ &
\fbox{\parbox{0.190\linewidth}{\centering Cohesion\\
contraction\\ $\Pc(t)\downarrow$}} &
$\Longrightarrow$ &
\fbox{\parbox{0.190\linewidth}{\centering Visibility\\
suppression\\ $V_{ij}(t)\downarrow$}} &
$\Longrightarrow$ &
\fbox{\parbox{0.190\linewidth}{\centering IRB-diagonal\\
description\\ within tolerance}}
\end{tabular}
\end{minipage}
\caption{Logical chain proved for phase-free IRB-aligned dephasing.  For general aligned
dephasing with diagonal phase evolution, the modulus-based functional $\Ctwo$ replaces
$\Pc$ when a phase-free interaction-picture representation is available.}
\label{fig:flow}
\end{figure*}

\begin{table*}[t]
\caption{Operational dictionary for the exact aligned-dephasing results.}
\label{tab:operational-dictionary}
\centering
\footnotesize
\renewcommand{\arraystretch}{1.25}
\begin{tabular}{lll}
\hline\hline
Observable & Formula or bound & Scope \\
\hline
Visibility & $V_{ij}=2|n_{ij}|/(a_i+a_j)$ & Phase-free IRB readout \\
Two-level visibility & $V=\Pc$ & Exact identity \\
Trace distance & $D_{\rm tr}(\rho_O,A)\leq\sqrt{d_{\rm act}/8}\,\Pc$ & Exact bound \\
Relative entropy & $C_{\rm rel}=S(\Delta_{\rm IRB}[\rho])-S(\rho)$ & Fixed-IRB benchmark \\
Classicalization time & $t_{\rm cl}^{\rm ub}=\Gamma_{\min}^{-1}\ln[\Pc(0)/\varepsilon]$ & Upper bound \\
\hline\hline
\end{tabular}
\end{table*}

\section{Coherence contraction under IRB-aligned dephasing}\label{sec:gksl}

\subsection{Exact dynamical class}

Fix the projectors $\{\Pi_i\}$ obtained from the IRB of a reference reduced state
$\rho(t_0)$.  Let
\begin{equation}
\AlgIRB:=\left\{\sum_i x_i\Pi_i:x_i\in\mathbb C\right\}
\end{equation}
be the commutative algebra diagonal in that IRB.  We call a GKSL generator
IRB-aligned dephasing if it admits a representation
\begin{align}
\mathcal L(\rho)=&-\mathrm{i}[H,\rho]\nonumber\\
&+\sum_\alpha\gamma_\alpha\left(L_\alpha\rho L_\alpha^\dagger
-\frac12\{L_\alpha^\dagger L_\alpha,\rho\}\right),
\label{eq:GKSL}
\end{align}
with $\gamma_\alpha\geq0$.
with~\cite{GKSL1976,Lindblad1976}
\begin{equation}
H=\sum_i h_i\Pi_i,\qquad
L_\alpha=\sum_i\ell_{\alpha i}\Pi_i .
\label{eq:Ldiag}
\end{equation}
This definition asserts alignment of the effective generator with the state-derived IRB;
it is not claimed to follow from the state alone.  It describes dephasing only: all
populations $\rho_{ii}$ are constant.  A channel that also mixes populations requires
additional off-diagonal jump operators and is outside the exact theorem below.

For $i\ne j$, Eqs.~\eqref{eq:GKSL}--\eqref{eq:Ldiag} give
\begin{align}
\dot\rho_{ij}&=-(\Gamma_{ij}+\mathrm{i}\Omega_{ij})\rho_{ij},\nonumber\\
\Gamma_{ij}&:=\frac12\sum_\alpha\gamma_\alpha
|\ell_{\alpha i}-\ell_{\alpha j}|^2\geq0.
\label{eq:Gammaij}
\end{align}
where $\Omega_{ij}$ contains the diagonal Hamiltonian frequency and any dissipative phase
shift.  Hence
\begin{equation}
|\rho_{ij}(t)|=|\rho_{ij}(t_0)|e^{-\Gamma_{ij}(t-t_0)} .
\label{eq:mod_decay}
\end{equation}

\subsection{Modulus-based contraction and intrinsic cohesion}

The phase-insensitive off-diagonal functional in the fixed IRB is
\begin{equation}
\Ctwo(t):=\sum_{i<j}|\rho_{ij}(t)|^2 .
\label{eq:Ctwo}
\end{equation}

\begin{proposition}[Contraction under IRB-aligned dephasing]\label{prop:Ctwo}
Under Eqs.~\eqref{eq:GKSL}--\eqref{eq:Ldiag},
\begin{equation}
\frac{\dd\Ctwo}{\dd t}
=-2\sum_{i<j}\Gamma_{ij}|\rho_{ij}(t)|^2\leq0 .
\label{eq:Ctwo_contract}
\end{equation}
If $\Gamma_{ij}>0$ for each initially nonzero off-diagonal element, then
$\Ctwo(t)\to0$.
\end{proposition}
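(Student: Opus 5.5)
The plan is to reduce everything to the componentwise equation \eqref{eq:Gammaij}, after first checking that this equation follows from the aligned generator.

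\textbf{Deriving the pair equation.} I will take matrix elements of Eq.~\eqref{eq:GKSL} in the fixed IRB, with $H$ and $L_\alpha$ diagonal as in Eq.~\eqref{eq:Ldiag}.

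The Hamiltonian term gives $-\mathrm{i}(h_i-h_j)\rho_{ij}$.

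For each $\alpha$, the dissipator gives
\[
\gamma_\alpha\Bigl(\ell_{\alpha i}\bar\ell_{\alpha j}-\tfrac12|\ell_{\alpha i}|^2-\tfrac12|\ell_{\alpha j}|^2\Bigr)\rho_{ij}.
\]
Its real part equals $-\tfrac12\gamma_\alpha|\ell_{\alpha i}-\ell_{\alpha j}|^2$. Its imaginary part is $\gamma_\alpha\operatorname{Im}(\ell_{\alpha i}\bar\ell_{\alpha j})$, which is absorbed into $\Omega_{ij}$.

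Summing over $\alpha$ yields $\dot\rho_{ij}=-(\Gamma_{ij}+\mathrm{i}\Omega_{ij})\rho_{ij}$ with real $\Omega_{ij}$. The diagonal elements satisfy $\dot\rho_{ii}=0$ by the same computation, but they do not enter $\Ctwo$.

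\textbf{Differentiating $\Ctwo$.} Each term satisfies
\[
\frac{\dd}{\dd t}|\rho_{ij}|^2=2\operatorname{Re}\bigl(\bar\rho_{ij}\dot\rho_{ij}\bigr)=2\operatorname{Re}\bigl(-(\Gamma_{ij}+\mathrm{i}\Omega_{ij})\bigr)|\rho_{ij}|^2=-2\Gamma_{ij}|\rho_{ij}|^2 .
\]
The phase term drops out precisely because $\Omega_{ij}$ is real. Summing over $i<j$ gives Eq.~\eqref{eq:Ctwo_contract}, and nonpositivity follows from $\Gamma_{ij}\geq0$. The off-diagonal elements are smooth solutions of linear ODEs, so differentiability is automatic.

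\textbf{Asymptotic statement.} I will use the explicit solution \eqref{eq:mod_decay}, which gives
\[
\Ctwo(t)=\sum_{i<j}|\rho_{ij}(t_0)|^2e^{-2\Gamma_{ij}(t-t_0)} .
\]
Terms with $\rho_{ij}(t_0)=0$ vanish identically. By hypothesis, the remaining finitely many terms have $\Gamma_{ij}>0$. The sum is therefore bounded by $\Ctwo(t_0)e^{-2\Gamma_{\min}(t-t_0)}\to0$, where $\Gamma_{\min}$ is the smallest rate over the initially nonzero pairs.

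\textbf{Main obstacle.} The only real point of care is confirming that the dissipative contribution to the pair equation is exactly $-\Gamma_{ij}$ plus a purely imaginary shift. This requires expanding the product $\ell_{\alpha i}\bar\ell_{\alpha j}$ correctly so that the real part combines into $-\tfrac12|\ell_{\alpha i}-\ell_{\alpha j}|^2$. Everything else is routine.
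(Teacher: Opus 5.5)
Your proof is correct and follows the paper's route: differentiate $\Ctwo$ term by term, insert the pair equation \eqref{eq:Gammaij}, and note that the real phase $\Omega_{ij}$ drops out of $\frac{\dd}{\dd t}|\rho_{ij}|^2$. Your derivation of \eqref{eq:Gammaij} from the diagonal generator, and the bound $\Ctwo(t)\leq\Ctwo(t_0)e^{-2\Gamma_{\min}(t-t_0)}$ you use for the limit, fill in steps the paper leaves implicit.
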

\begin{proof}
Differentiate Eq.~\eqref{eq:Ctwo} and insert Eq.~\eqref{eq:Gammaij}; the phase
frequencies cancel from the derivative of each modulus squared.
\end{proof}

To track the intrinsic antisymmetric block itself, we restrict to the phase-free
subclass of IRB-aligned dephasing, meaning $\Omega_{ij}=0$ for the active pairs.  This condition
holds, for example, for Hermitian diagonal dephasing operators after transformation to
the interaction picture of the diagonal Hamiltonian.  Since the
reference IRB has $\rho_{ij}(t_0)=\mathrm{i}n_{ij}(t_0)$, one then has
\begin{equation}
N_{ij}(t)=N_{ij}(t_0)e^{-\Gamma_{ij}(t-t_0)},
\qquad \Ctwo(t)=\Ucoh(t).
\label{eq:nij_decay}
\end{equation}

\begin{proposition}[Cohesion Lyapunov law]\label{prop:PcContract}
For phase-free IRB-aligned dephasing,
\begin{equation}
\frac{\dd\Pc^2}{\dd t}
=-8\sum_{i<j}\Gamma_{ij}n_{ij}(t)^2\leq0 .
\label{eq:Pc_contract}
\end{equation}
Let $\Gamma_{\min}$ be the minimum positive $\Gamma_{ij}$ among initially nonzero
$N_{ij}$.  If no initially nonzero component is protected, then
\begin{equation}
\Pc(t)\leq\Pc(t_0)e^{-\Gamma_{\min}(t-t_0)} .
\label{eq:Ucoh_bound}
\end{equation}
\end{proposition}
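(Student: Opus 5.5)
The plan is to reduce everything to the explicit solution \eqref{eq:nij_decay} and use $\Pc^2=4\Ucoh=4\sum_{i<j}n_{ij}^2$ from Eq.~\eqref{eq:Pc}. Work in the fixed reference IRB of $\rho(t_0)$. There $\rho_{ij}(t_0)=\mathrm{i}n_{ij}(t_0)$ for $i\neq j$, because $A$ is diagonal. Under the phase-free hypothesis $\Omega_{ij}=0$, Eq.~\eqref{eq:Gammaij} becomes the real linear equation $\dot\rho_{ij}=-\Gamma_{ij}\rho_{ij}$, so $\rho_{ij}(t)=\mathrm{i}n_{ij}(t_0)e^{-\Gamma_{ij}(t-t_0)}$ stays purely imaginary. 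The diagonal entries are constant.

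Next I will check that the fixed projectors remain an IRB of $\rho(t)$. The $K$-real symmetric part is still $A$, diagonal with the same ordered populations. The antisymmetric part is the real matrix $N(t)$ with entries $N_{ij}(t_0)e^{-\Gamma_{ij}(t-t_0)}$. Hence $Q$ may be kept fixed, and $n_{ij}(t)=N_{ij}(t)$ is a legitimate choice. If $A$ has degeneracies, other IRB representatives differ by an orthogonal gauge transformation. Since $\Pc$ depends only on $\|N\|_F$, which is gauge-invariant (Sec.~\ref{sec:toolkit}), this freedom does not affect the claim.

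For the derivative, differentiate $\Pc^2=4\sum_{i<j}n_{ij}^2$ and substitute $\dot n_{ij}=-\Gamma_{ij}n_{ij}$. This gives $\dd\Pc^2/\dd t=-8\sum_{i<j}\Gamma_{ij}n_{ij}^2\leq0$, since $\Gamma_{ij}\geq0$. Equivalently, it is $4\times$ Eq.~\eqref{eq:Ctwo_contract} with $\Ctwo=\Ucoh$ from Eq.~\eqref{eq:nij_decay}.

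For the bound, components with $N_{ij}(t_0)=0$ remain zero. By hypothesis, every other component has $\Gamma_{ij}\geq\Gamma_{\min}>0$. Therefore
\[
\Pc(t)^2=4\sum_{i<j}n_{ij}(t_0)^2e^{-2\Gamma_{ij}(t-t_0)}\leq e^{-2\Gamma_{\min}(t-t_0)}\Pc(t_0)^2,
\]
and taking square roots gives Eq.~\eqref{eq:Ucoh_bound}. Alternatively, the differential law gives $\dd\Pc^2/\dd t\leq-2\Gamma_{\min}\Pc^2$, and Grönwall yields the same result.

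The main obstacle is the second step: justifying that the cohesion index computed in the fixed reference frame equals the intrinsic $\Pc$ of $\rho(t)$. This requires that the symmetric part is unchanged and that the evolved off-diagonal part stays $K$-antisymmetric. Both hold only because the phases vanish; with $\Omega_{ij}\neq0$ a real part would leak into $S$ and the statement would fall back to $\Ctwo$.
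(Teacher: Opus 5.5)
Your proposal is correct and follows the paper's own argument. The paper differentiates $\Pc^2=4\sum_{i<j}n_{ij}^2$ using the explicit decay law of Eq.~\eqref{eq:nij_decay} and sums the termwise bounds $n_{ij}(t)^2\leq n_{ij}(t_0)^2e^{-2\Gamma_{\min}(t-t_0)}$. Your extra check that the fixed reference frame remains an exact IRB is what the paper states separately in Sec.~\ref{sec:selfconsistency} and builds into Eq.~\eqref{eq:nij_decay}.
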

\begin{proof}
Equation~\eqref{eq:Pc_contract} follows from Eq.~\eqref{eq:Pc} and
Eq.~\eqref{eq:nij_decay}.  Summing
$n_{ij}(t)^2\leq n_{ij}(t_0)^2e^{-2\Gamma_{\min}(t-t_0)}$ proves
Eq.~\eqref{eq:Ucoh_bound}.
\end{proof}

If some $\Gamma_{ij}=0$ on an initially nonzero component, complete contraction does not
occur.  The limiting object is then partially coherent or block-classical; this case is
treated in Sec.~\ref{sec:degeneracies}.

\subsection{Frame consistency and experimental alignment}
\label{sec:selfconsistency}

In the phase-free setting, $A$ is stationary and the fixed IRB remains an exact IRB for
all times.  In the Schr\"odinger picture with nonzero diagonal phase frequencies,
Eq.~\eqref{eq:mod_decay} still gives exact contraction of $\Ctwo$, but a purely imaginary
initial element can acquire a real part in the fixed frame.  Consequently, $\Pc$ as an
intrinsic antisymmetric descriptor is to be evaluated in the phase-free interaction
picture when it is used as the Lyapunov diagnostic.

\begin{proposition}[Fixed-frame bound with diagonal phases]\label{prop:frameStability}
Under IRB-aligned dephasing, if $\rho_{ij}(t_0)=\mathrm{i}n_{ij}(t_0)$, then the
real off-diagonal entries generated in the fixed initial IRB satisfy
\begin{equation}
\left|\operatorname{Re}\rho_{ij}(t)\right|
\leq |n_{ij}(t_0)|e^{-\Gamma_{ij}(t-t_0)} .
\label{eq:Sij_bound}
\end{equation}
Thus both the real and imaginary off-diagonal parts in the fixed frame vanish whenever
all active rates are positive.
\end{proposition}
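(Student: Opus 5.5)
The plan is to solve the linear equation for each off-diagonal element explicitly and then bound its real part by its modulus. Under Eqs.~\eqref{eq:GKSL}--\eqref{eq:Ldiag}, Eq.~\eqref{eq:Gammaij} is a scalar linear ODE for each pair $i\neq j$. The decay rate $\Gamma_{ij}$ and the phase frequency $\Omega_{ij}$ are constant, because $H$ and the $L_\alpha$ are time-independent combinations of the fixed projectors $\Pi_i$. Integrating from $t_0$ therefore gives
\begin{equation*}
\rho_{ij}(t)=\rho_{ij}(t_0)\,e^{-(\Gamma_{ij}+\mathrm{i}\Omega_{ij})(t-t_0)} .
\end{equation*}
Substituting the hypothesis $\rho_{ij}(t_0)=\mathrm{i}n_{ij}(t_0)$, with $n_{ij}(t_0)$ real since it is an entry of the real antisymmetric $N$ in the reference IRB, yields
\begin{equation*}
\rho_{ij}(t)=\mathrm{i}n_{ij}(t_0)e^{-\Gamma_{ij}(t-t_0)}\left[\cos\Omega_{ij}(t-t_0)-\mathrm{i}\sin\Omega_{ij}(t-t_0)\right].
\end{equation*}
Its real part is $n_{ij}(t_0)e^{-\Gamma_{ij}(t-t_0)}\sin\Omega_{ij}(t-t_0)$. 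Bounding $|\sin|\le1$ gives Eq.~\eqref{eq:Sij_bound} directly. Alternatively, one can skip the explicit formula and use $|\operatorname{Re}z|\le|z|$ together with Eq.~\eqref{eq:mod_decay}.

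The one point to check is that $\Omega_{ij}$ is real, so that the phase factor has unit modulus. I would verify this by writing out $\dot\rho_{ij}$ for diagonal $L_\alpha$. The dissipator contributes
\begin{equation*}
\sum_\alpha\gamma_\alpha\left(\ell_{\alpha i}\bar\ell_{\alpha j}-\tfrac12|\ell_{\alpha i}|^2-\tfrac12|\ell_{\alpha j}|^2\right)\rho_{ij}.
\end{equation*}
Its real part is $-\Gamma_{ij}$, and its imaginary part is $\sum_\alpha\gamma_\alpha\operatorname{Im}(\ell_{\alpha i}\bar\ell_{\alpha j})$. Adding the Hamiltonian contribution $-\mathrm{i}(h_i-h_j)$, with $h_i$ real by Hermiticity of $H$, shows that $\Omega_{ij}$ is real. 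This confirms that $\Omega_{ij}$ collects only the Hamiltonian frequency and the dissipative phase shift, as stated after Eq.~\eqref{eq:Gammaij}.

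For the final sentence of the proposition, the same explicit solution gives $|\operatorname{Im}\rho_{ij}(t)|\le|n_{ij}(t_0)|e^{-\Gamma_{ij}(t-t_0)}$. Hence both parts vanish as $t\to\infty$ whenever $\Gamma_{ij}>0$ on all active pairs. For pairs with $n_{ij}(t_0)=0$ the element stays zero identically, so only the active pairs matter.

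I expect no real obstacle here. The only subtlety is the reality of $\Omega_{ij}$, which the coefficient computation above settles.
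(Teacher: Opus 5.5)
Your proof is correct and is essentially the paper's argument: the paper's one-line proof bounds $|\operatorname{Re}\rho_{ij}(t)|$ by $|\rho_{ij}(t)|$ and applies the modulus decay law, which is exactly your stated alternative. Your explicit solution adds two things the paper leaves implicit: the exact real part $n_{ij}(t_0)e^{-\Gamma_{ij}(t-t_0)}\sin\Omega_{ij}(t-t_0)$, which shows the bound is attained whenever $\Omega_{ij}(t-t_0)\equiv\pi/2 \pmod{\pi}$, and a check that $\Omega_{ij}$ is real.
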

\begin{proof}
Equation~\eqref{eq:mod_decay} bounds each real and imaginary component by the modulus of
$\rho_{ij}(t)$.
\end{proof}

Alignment is operationally testable: state tomography determines the reference IRB
projectors, and process tomography can determine whether the effective generator admits
an approximately diagonal representation in those projectors.  Failure of this alignment
does not contradict the IRB decomposition; it places the observed dynamics outside the
exact class treated here.

A sufficient microscopic route to alignment uses a fixed physical observable
$X=\sum_i x_i\Pi_i$ whose spectral projectors coincide with the candidate IRB sectors.
An interaction of the form
\begin{equation}
V_{SE}=\sum_\alpha f_\alpha(X)\otimes B_\alpha^E
\label{eq:VSE}
\end{equation}
produces dephasing diagonal in these sectors under the standard Markovian/secular
reduction when energy exchange can be neglected~\cite{BreuerPetruccione}.  Importantly, $X$ is a fixed physical
observable; the interaction is not taken to depend on the evolving reduced state.

Dynamics with small nonaligned perturbations can be studied by stability estimates, but
no quantitative perturbative bound is claimed in the present paper.

\section{Classicality criterion, classicalization time, and pointer structure}
\label{sec:threshold}

\subsection{Operational intrinsic-classicality criterion}

The criterion in this paper is relative to the fixed physical conjugation and to a
dephasing dynamics aligned with its IRB sectors.  It does not assert that a state is
classical with respect to every externally imposed measurement basis; it asserts that
its intrinsic cohesion is unresolved in the dynamically aligned sectors.  In the
phase-free setting, we call the state effectively IRB-classical at tolerance
$\varepsilon$ when
\begin{equation}
\Pc(t)\leq\varepsilon,\qquad 0<\varepsilon\ll1 .
\label{eq:threshold}
\end{equation}
By Eqs.~\eqref{eq:hsdist} and~\eqref{eq:trdist_bound}, this criterion entails
\begin{equation}
\big\|\rho_O-A\big\|_2\leq\frac{\varepsilon}{\sqrt{2}},
\qquad
D_{\rm tr}(\rho_O,A)\leq\sqrt{\frac{d_{\rm act}}{8}}\,\varepsilon .
\label{eq:threshold_dist}
\end{equation}
For general aligned dephasing with diagonal phase evolution, the phase-insensitive version
of the criterion is $2\sqrt{\Ctwo(t)}\leq\varepsilon$.

\subsection{Logarithmic classicalization-time bound}

Assume that every initially nonzero intrinsic cohesion component is damped with positive
rate and let $\Gamma_{\min}$ be the smallest such rate.  From
Eq.~\eqref{eq:Ucoh_bound}, the threshold~\eqref{eq:threshold} is reached no later than
\begin{equation}
 t_{\rm cl}^{\rm ub}(\varepsilon)=
 \max\!\left\{0,\frac{1}{\Gamma_{\min}}
 \ln\!\left(\frac{\Pc(t_0)}{\varepsilon}\right)\right\} .
\label{eq:tcl}
\end{equation}
This is an upper bound in the multi-rate problem; it becomes an equality if all initially
occupied components decay at $\Gamma_{\min}$, including the single-coherence two-level
case.

\subsection{Two-level example}

For a normalized two-level active sector, the IRB state is
\begin{equation}
\begin{gathered}
\rho_O=\begin{pmatrix}a&\mathrm{i}n\\-\mathrm{i}n&1-a\end{pmatrix},\\
0\leq a\leq1,\quad n^2\leq a(1-a) .
\end{gathered}
\label{eq:qubit}
\end{equation}
Here $\Pc=2|n|=V$, and phase-free dephasing gives
$\Pc(t)=\Pc(0)e^{-\Gamma t}$.  Therefore
\begin{equation}
 t_{\rm cl}(\varepsilon)=
 \max\!\left\{0,\Gamma^{-1}
 \ln\!\left(\frac{\Pc(0)}{\varepsilon}\right)\right\} .
\label{eq:tcl_qubit}
\end{equation}
Equivalently, one may replace $\Pc(0)$ by $V(0)$ in this two-level sector.
The equality $\Pc=V$ holds for any normalized two-level IRB state, not only for equal
populations.  The dynamics is illustrated in Fig.~\ref{fig:qubit_dynamics}.

\begin{figure}[t]
\centering
\includegraphics[width=0.80\linewidth]{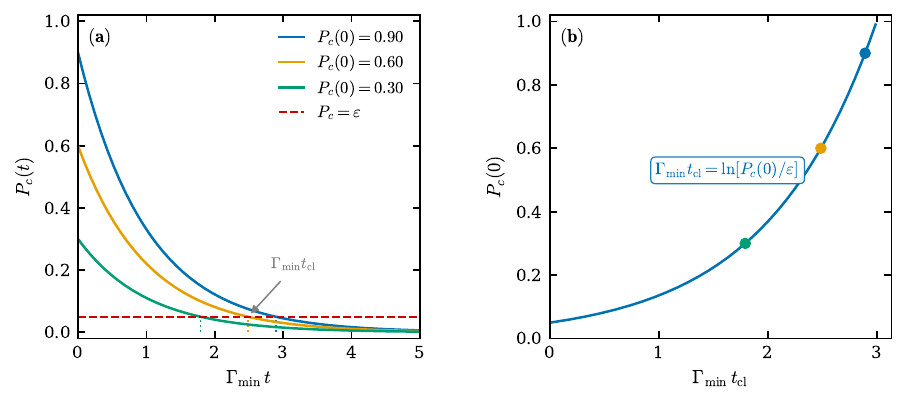}
\caption{Two-level IRB classicalization under a single phase-free dephasing rate.  The
cohesion index, identical here to fringe visibility, crosses the tolerance at
$\Gamma t_{\rm cl}=\ln[\Pc(0)/\varepsilon]$.}
\label{fig:qubit_dynamics}
\end{figure}

\subsection{Degeneracies and block-classicality}
\label{sec:degeneracies}

When $A$ has degenerate eigenvalues, the state alone identifies projectors
$P_\alpha$ onto degenerate IRB subspaces rather than preferred rank-one projectors within
each subspace.  Define the inter-block cohesion
\begin{equation}
\Pcperp^2:=4\sum_{\alpha<\beta}
\|P_\alpha N P_\beta\|_F^2 .
\label{eq:Pcperp}
\end{equation}
If the dephasing generator is scalar within each $P_\alpha$ but distinguishes different
blocks, then $\Pcperp$ contracts and the limiting state is block diagonal, whereas
cohesion inside a protected block may survive.  Thus block-classicality is not merely a
technical degeneracy case.  It describes a robust situation in which the environment
monitors coarse sectors while leaving intra-sector coherence unresolved.  States whose
cohesion is entirely internal to such unresolved blocks are fixed points of the
corresponding block-aligned dephasing: inter-block cohesion is suppressed, while
intra-block cohesion is retained.

If the environment distinguishes vectors inside a degenerate block, then the effective
generator selects a finer pointer structure that is not fixed by the state-derived IRB
alone.  Thus degeneracy makes explicit the complementary roles of state structure and
environmental monitoring.  The projectors $P_\alpha$ are state-derived candidate sectors;
their operational status as stable pointer sectors is a statement about the channel that
monitors or fails to monitor the coherences between and inside those sectors.

\subsection{Candidate pointer sectors and Born weights}
\label{sec:born}

The IRB projectors are candidate pointer sectors obtained from the state.  Under the
alignment hypothesis of Sec.~\ref{sec:gksl}, they are invariant under the dephasing
part of the generator; if the Hamiltonian is also diagonal in these projectors, they are
stable under the full aligned evolution.  They are selected as distinct operational
pointer sectors only when the inter-sector rates relevant to the initial state are
positive.  Once Eq.~\eqref{eq:threshold} is met, the
reduced state is operationally close to
\begin{equation}
A=\sum_i a_i\Pi_i,
\end{equation}
and the probabilities of readouts in these projectors are
$\Tr(\rho_O\Pi_i)=a_i$ by the standard trace rule.  The present construction does not
derive the Born rule; it identifies the intrinsic weights associated with the stable
sectors when the alignment condition holds.

\section{Relation to environment-induced einselection}\label{sec:comparison}

In environment-induced einselection, pointer states are determined by the dynamical
robustness imposed by the system--environment coupling~\cite{Zurek2003,Schlosshauer2007}.
The IRB does not replace that mechanism.  It provides, for a fixed physical conjugation,
a state-derived decomposition into populations and intrinsic cohesion, together with a
candidate family of projectors.  The exact result of this paper is conditional and
checkable: when the reduced dephasing generator is aligned with those candidate
projectors, intrinsic cohesion contracts at explicitly calculable rates and the
projectors are dynamically stable sectors.

This distinction is essential.  A reduced state can always be put into IRB form relative
to $K$, but the environment need not monitor its IRB sectors.  When state tomography and
process tomography show alignment, the channel-level classicalization bounds derived
above apply.  When they show persistent misalignment, one has identified a regime in
which the state-derived intrinsic decomposition and the interaction-selected pointer
structure are different.

Degeneracies sharpen the comparison.  A degenerate population block supplies only a
state-derived subspace.  Monitoring that is scalar inside the block leaves intra-block
cohesion protected, whereas monitoring that resolves the block chooses an additional
basis through the environment.  This is the same structural distinction encountered in
decoherence-free subspaces and noiseless codes~\cite{ZanardiRasetti1997,Lidar2014}.  In
this sense the block-classical regime identifies a channel-level mechanism by which a
coarse classical sector can coexist with protected coherence internal to that sector.

\section{Experimental signatures and falsifiability}\label{sec:signatures}

All exact quantities in the construction are accessible from reduced-state or process
tomography, and the two-level instance is accessible directly by interferometry.  An
experimental assessment proceeds in three stages.  First, tomography of $\rho(t_0)$ and
the declared physical conjugation determines the IRB projectors and $\Pc(t_0)$.  Second,
process tomography tests whether the inferred generator is approximately diagonal in
those projectors.  Third, if alignment holds, the measured decay envelopes are compared
with Eqs.~\eqref{eq:mod_decay}, \eqref{eq:Ucoh_bound}, and \eqref{eq:tcl}.

In a Ramsey or Mach--Zehnder implementation of a normalized two-level sector, the
direct prediction is
\begin{equation}
\begin{gathered}
V(t)=V(0)e^{-\Gamma t},\\
t_{\rm cl}(\varepsilon)=\Gamma^{-1}
\ln\!\left(\frac{V(0)}{\varepsilon}\right),\quad V(0)>\varepsilon .
\end{gathered}
\label{eq:tcl_ramsey}
\end{equation}
The visibility decay rate and the trace-distance decay rate obtained from tomography
must agree.  This benchmark can be implemented in photonic, trapped-ion, or
superconducting-qubit platforms~\cite{OBrien2009,DevoretSchoelkopf2013,Krantz2019}.

Multi-path implementations can test the block-classical extension.  If a degenerate
sector is not resolved by the generator, inter-block visibilities must decay while
protected intra-block visibilities may persist.  If a measured generator instead resolves
the nominally degenerate block, the observed finer pointer structure is attributed to the
dynamics rather than to the state-derived IRB.

The mechanism is falsified as an account of a given experiment if the generator is shown
to be aligned and Markovian in the stated sense while the off-diagonal moduli fail to
obey nonincreasing exponential envelopes with nonnegative rates.  Failure of alignment,
non-Markovian recoherence, or time-dependent support does not falsify the algebraic IRB
decomposition; it places the experiment beyond the exact dynamical class analyzed here.

\section{Discussion and outlook}\label{sec:discussion}

We have established an operational classicalization result for reduced quantum states in
a precisely delimited dynamical setting.  Relative to a fixed physical conjugation, the
IRB writes the state as $\rho_O=A+\mathrm{i}N$ and the globally normalized cohesion
index $\Pc^2=2\|N\|_F^2$ quantifies its intrinsic antisymmetric sector.  Under
phase-free dephasing aligned with the IRB projectors, $\Pc$ is a Lyapunov diagnostic and
contracts exponentially; under general aligned diagonal-phase evolution, the associated
modulus-based off-diagonal functional contracts exactly.  These results yield distance
bounds to the IRB-diagonal description and a logarithmic upper bound on the time required
to reach a prescribed tolerance.

The scope of the statement is intentionally explicit.  The IRB supplies candidate pointer
sectors from the reduced state, while their dynamical stability requires alignment with
the effective generator.  This is not an alternative derivation of einselection from the
state alone; it is a state-and-channel criterion that can be verified by tomography.  In
the degenerate case it predicts block-classicality when the environment monitors only
coarse sectors and allows protected intra-block cohesion when the generator does not
resolve those sectors.

The two-level case makes the criterion directly observable because $\Pc$ equals fringe
visibility.  In higher-dimensional systems, pairwise visibilities and tomography provide
complementary access to the cohesion components and to the distance bounds.  The relative
entropy of coherence remains an appropriate general resource-theoretic comparator for a
fixed IRB projective structure; the present paper claims Lyapunov behavior of $\Pc$ only
for the aligned phase-free class for which it has been proved.

Three extensions are natural.  One is a controlled treatment of slowly varying intrinsic
projectors when populations evolve appreciably during decoherence.  The second is a
systematic comparison between the IRB candidate sectors and pointer structures obtained
from explicit microscopic environments.  The third is to ask whether block structures of
the kind analyzed here can be selected by variational, calibration, or information-geometric
principles, rather than imposed by an externally reconstructed generator.  Such a principle
would determine the physical conjugation or block structure before the present Lyapunov
statements are applied.  None of these extensions is required for the exact contraction
and classicalization-time bounds derived here.

\appendix

\section{Intrinsic reference basis: construction and uniqueness}
\label{app:IRB}

Given a fixed physical conjugation $K$, any Hermitian density operator on the active
subspace admits the decomposition
\begin{align}
\rho&=S+\mathrm{i}T,\nonumber\\
S&:=\frac{\rho+K\rho K}{2},\quad S^{\mathsf T}=S,\nonumber\\
T&:=\frac{\rho-K\rho K}{2\mathrm{i}},\quad T^{\mathsf T}=-T.
\label{eq:SAdecomp_app}
\end{align}
Choose $Q\in SO(d_{\rm act})$ so that
$Q^{\mathsf T}SQ=A=\operatorname{diag}(a_1,\ldots,a_{d_{\rm act}})$ with ordered
entries, and define $N=Q^{\mathsf T}TQ$.  This yields
$\rho_O=A+\mathrm{i}N$.

If the spectrum of $S$ is nondegenerate, ordering fixes the permutation freedom, but
projectors are insensitive to eigenvector signs and $Q$ retains orientation-preserving
sign choices.  These choices may change signs of individual $n_{ij}$ while leaving
$\Pc$, $\Ucoh$, and the projectors invariant.  If $S$ is degenerate, $Q$ retains
orthogonal freedom within each degenerate eigenspace; the canonical state-derived objects
are then the block projectors and block-invariant quantities such as
Eq.~\eqref{eq:Pcperp}.

The conjugation $K$ is part of the physical specification of the analysis.  It may be
set by a laboratory convention, by a real symmetry representation, or by a secular
basis.  It is fixed before assessing cohesion decay.  Equivalently, if a further physical
or information-theoretic rule is used to choose $K$ from the state and its context, that
choice is made first and becomes part of the channel specification.  The further statement
that the resulting IRB sectors are pointer sectors is not automatic: it is the alignment
condition on the reduced generator tested in Sec.~\ref{sec:selfconsistency}.

\section{Two-path interferometric visibility benchmark}
\label{app:visibility}

In the two-dimensional subspace spanned by $\{\lvert i\rangle,\lvert j\rangle\}$, an
ideal balanced interferometric readout produces
\begin{equation}
p(\phi)=\frac{\rho_{ii}+\rho_{jj}}{2}
+\operatorname{Re}\!\left(\rho_{ij}e^{\mathrm{i}\phi}\right).
\label{eq:p_phi_app}
\end{equation}
The standard visibility is
\begin{equation}
V_{ij}=\frac{p_{\max}-p_{\min}}{p_{\max}+p_{\min}}
=\frac{2|\rho_{ij}|}{\rho_{ii}+\rho_{jj}} .
\label{eq:Vij_app}
\end{equation}
In a phase-free IRB description, $\rho_{ij}=\mathrm{i}n_{ij}$, yielding
Eq.~\eqref{eq:visibility_general}.  If the active Hilbert space is two dimensional and
normalized, $a_1+a_2=1$ and therefore $V_{12}=2|n_{12}|=\Pc$ by
Eq.~\eqref{eq:Pc}; population balance is not required.  With diagonal phase evolution,
the visibility follows the modulus contraction of Eq.~\eqref{eq:mod_decay}.

\section*{Author contribution statement}
Language-editing and editorial-structuring assistance from AI-based tools was used during
manuscript preparation. All scientific content, derivations, and final decisions are the
sole responsibility of the author.

\end{document}